\newtheorem{theorem}{Theorem}[section]
\newtheorem{proposition}[theorem]{Proposition}
\theoremstyle{definition}
\newtheorem{definition}[theorem]{Definition}
\newtheorem{example}{Example}
\newcommand{\R}{\ensuremath{\mathbb{R}}}
\newcommand{\F}{\ensuremath{\mathbb{F}}}
\begin{document}
	\title{Contact geometry for simple thermodynamical systems with friction}
	
	\author{
		{\bf\large Alexandre Anahory Simoes}\hspace{2mm}
		\vspace{1mm}\\
		{\it\small Instituto de Ciencias Matematicas (CSIC-UAM-UC3M-UCM)}\\
		{\it\small Calle Nicolas Cabrera, 13-15, Campus Cantoblanco, UAM}, {\it\small 28049 Madrid, Spain}\\
		\vspace{2mm}\\
		{\bf\large Manuel de León}\hspace{2mm}
	\vspace{1mm}\\
	{\it\small Instituto de Ciencias Matematicas and Real Academia Española de Ciencias}\\
	{\it\small Calle Nicolas Cabrera, 13-15, Campus Cantoblanco, UAM}, {\it\small 28049 Madrid, Spain}\\
		\vspace{2mm}\\
		{\bf\large Manuel Lainz Valcázar}\hspace{2mm}
		\vspace{1mm}\\
		{\it\small Instituto de Ciencias Matematicas (CSIC-UAM-UC3M-UCM) }\\
		{\it\small Calle Nicolas Cabrera, 13-15, Campus Cantoblanco, UAM}, {\it\small 28049 Madrid, Spain}\\
		\vspace{2mm}\\
		{\bf\large David Martín de Diego}\hspace{2mm}
		\vspace{1mm}\\
		{\it\small Instituto de Ciencias Matematicas (CSIC-UAM-UC3M-UCM) }\\
		{\it\small Calle Nicolas Cabrera, 13-15, Campus Cantoblanco, UAM}, {\it\small 28049 Madrid, Spain}\\
	}

	\maketitle
	
	\begin{abstract}
	Using contact geometry we give a new characterization of a simple but  important class of thermodynamical systems which naturally satisfy the first law of thermodynamics (total energy preservation) and the second law (increase of entropy).  We completely clarify its qualitative dynamics, the underlying geometrical structures and we  show how to use discrete gradient methods.
	\end{abstract} 
	
	\let\thefootnote\relax\footnote{\noindent AMS {\it Mathematics Subject Classification (2010)}. Primary 37J55; Secondary  37D35, 70G45, 80M25.\\
		\noindent Keywords. contact geometry, thermodynamical systems, single bracket formulation, discrete gradient methods}
	
	\section{Introduction}
	
	In this paper, we introduce a differential geometric framework  that incorporates in a very natural way fundamental thermodynamical concepts as the free energy and the rate of entropy production. 
	
	Typically, in the previous literature, this description needs to introduce appropriate Poisson and dissipation brackets with combined properties that allows the two laws of thermodynamics to be satisfied.

	One of the most successful methods are based on the introduction of \textit{metriplectic structures} (see \cite{Kaufman, morrison} coupling a Poisson and a gradient structure, where the entropy now $S$ is constructed from a Casimir function of the Poisson structure. Other approaches like in  \cite{Ed-Ber, Ed-Ber-2} use similar techniques, called \textit{single generation formalism} introducing a generalized bracket which is naturally divided into two parts: a non-canonical Poisson bracket and a new dissipation bracket. The derived structures are capable of reproducing both reversible and irreversible evolutions providing a unifying formalism for many systems  ruled by the laws of thermodynamics (see also \cite{vander}).	These approaches have proved to be very useful for the description of complex thermodynamical systems and also facilitate their numerical integration. 
	
	Also recently, Gay-Balmaz and Yoshimura \cite{Gay-Balmaz2017, Gay-Balmaz2019} have introduced a  ``variational principle" for the description of thermodynamical systems. Their formulation extends the Hamilton principle of classical mechanics to include irreversible processes by introducing additional phenomenological and variational constraints. 
	
	A  more geometrical approach is based on the use  of contact geometry \cite{Godbillon1969, marle}. In this approach it is proposed that the thermodynamical phase space is equipped with a contact structure. For each function $f$, using the contact structure, it is possible to associate a Hamiltonian vector field $X_f$ which is  the infinitesimal generator of a contact transformation (see Section \ref{sec2}).  In this framework the manifold of equilibrium states is represented by a Legendre submanifold.  The Hamiltonian vector field $X_f$ is tangent to the Legendrian submanifold if and only if the function vanishes on the Legendre submanifold, that is, the Legendre submanifold is contained on the zero level set of the Hamiltonian vector field. The flow of $X_f$ restricted to the Legendrian submanifold are interpreted as thermodynamical processes \cite{Mruga,Mrugala1991,GrPa}. More recently, there has been a resurgence of interest in the study of contact dynamics mainly for the study of systems with dissipation and their geometric properties (\cite{Bravetti2017,Bravetti2018, deLeon2018}). 
	
	In this paper, based on the two laws of thermodynamics and the contact geometry, we study, in Sections \ref{sec2} and \ref{sec3},  the thermodynamical evolution in terms of a different vector field from the Hamiltonian field associated with the structure of contact and  a function. In this case, we study the dynamics associated to the evolution or horizontal  vector field. This vector field is defined in terms of the bi-vector canonically associated with the contact structure. We will check that this vector field satisfies for natural Hamiltonian functions the two laws of thermodynamics and we study its qualitative behaviour. Moreover, the relation with the single generation formalism  is stated without the use  of any artificial construction. Finally, in Section \ref{sec4}, since the evolution vector field is associated to a bi-vector field we analyse the possibility of numerically approaching the flow using discrete gradient methods (see for instance \cite{Gonz,QT1996,ITOH}).

	\section{ Contact geometry}\label{sec2}
	In this section, we consider some ingredients of contact geometry that we will need in the sequel \cite{Godbillon1969, marle,deLeon2018}. 
	
	Let $M$ be a differentiable manifold of dimension $2n+1$ and a 1-form $\eta$ on $M$. We say that $\eta$ is a contact 1-form if $\eta\wedge (d\eta)^n\not=0$ at every point. We say that $(M, \eta)$ is a contact manifold. A distinguished vector field for a contact manifold is the Reeb vector field $R\in {\mathfrak X}(M)$ univocally characterized by 
	\[
	i_R\eta=1 \quad \hbox{and}\quad  i_Rd\eta=0\; .
	\]
	We can define also an isomorphism of $C^{\infty}(M, \R)$ modules by
		\[
	\begin{array}{rrcl}
\flat:& {\mathfrak X}(M)&\longrightarrow& \Omega^1(M)\\
	& X&\longmapsto& i_Xd\eta +\eta(X)\eta
	\end{array}
	\]
	Observe that $\flat^{-1}(\eta)=R$. 
	
	Using the generalized Darboux theorem,  we have canonical coordinates $(q^i, p_i, S)$, $1\leq i\leq n$ in a neighborhooh of every point $x\in M$, such that the contact 1-form $\eta$ and the Reeb vector field  are: 
	\[
	\eta=dS-p_i\; dq^i \qquad \hbox{and} \qquad R=\frac{\partial}{\partial S}\; .
	\]
	
Define the bi-vector $\Lambda$ on $M$ by  
\begin{equation}\label{Lambda:intrinsic}
	\Lambda(\alpha, \beta)=-d\eta(\flat^{-1}(\alpha), \flat^{-1}(\beta)), \qquad \alpha, \beta \in \Omega^1(M)\; .
\end{equation}
In canonical coordinates, 
\begin{equation}\label{Lambda:coordinates}
	\Lambda=\frac{\partial}{\partial p_i}\wedge \left(\frac{\partial}{\partial q^i}+p_i\frac{\partial}{\partial S}\right)
\end{equation}

Define the $C^{\infty}(M, \R)$-linear mapping $$\sharp_{\Lambda}: \Omega^1(M)\rightarrow {\mathfrak X}(M)$$ by $\langle \beta, \sharp(\alpha)\rangle=\Lambda(\alpha, \beta)$ with  $\alpha, \beta \in \Omega^1(M)$. 

Given a function $f\in C^{\infty}(M, \R)$ we will define the following vector fields
\begin{itemize}
\item {\bf Hamiltonian or contact vector field} $X_f$ defined by
\[
X_f=\sharp_{\Lambda} (df)-fR
\]
or in other terms, $X_f$ is the unique vector field such that
\[
\flat(X_f)=df-(R(f)+f)\, \eta\; .
\]

In canonical coordinates:
\[
X_f=\frac{\partial f}{\partial p_i}\frac{\partial}{\partial q^i}
-\left(\frac{\partial f}{\partial q^i}+p_i\frac{\partial f}{\partial S}\right)
\frac{\partial}{\partial p_i}+
\left(p_i\frac{\partial f}{\partial p_i}-f\right)\frac{\partial}{\partial S}
\]

\item {\bf The evolution or horizontal  vector field}
\[
{\mathcal E}_f=\sharp_{\Lambda} (df)=X_f+fR
\]
or
\[
\flat({\mathcal E}_f)=df-R(f)\, \eta\; .
\]
In canonical coordinates:
\[
{\mathcal E}_f=\frac{\partial f}{\partial p_i}\frac{\partial}{\partial q^i}
-\left(\frac{\partial f}{\partial q^i}+p_i\frac{\partial f}{\partial S}\right)
\frac{\partial}{\partial p_i}+
p_i\frac{\partial f}{\partial p_i}\frac{\partial}{\partial S}
\]
We will see in the next section that the evolution vector field will be useful to describe some simple thermodynamical systems with friction where the variable $S$ will play the role of the entropy of the system. 
\end{itemize}

The pair $(\Lambda, E=-R)$ is a particular case of   Jacobi structure since it satisfies
\[
[\Lambda, \Lambda]=2E\wedge \Lambda \quad \hbox{and} \quad [\Lambda, E]=0\; .
\]	
	From this  Jacobi structure we can define the Jacobi bracket as follows: 
	\[
	\{f, g\}=\Lambda(df, dg)+f E(g)-g E(f), \quad f, g\in C^{\infty}(M, \R)
	\]
	The mapping  $\{\;  ,\;  \}: C^{\infty}(M, \R) \times  C^{\infty}(M, \R) \longrightarrow C^{\infty}(M, \R)$ is bilinear, skew-symmetric and satisfies the Jacobi’s identity but, in general, it does not satisfy the Leibniz rule; this last property is replaced by a weaker condition: 
	\[
	\hbox{Supp} \ \{f, g\}\subset 	\hbox{Supp} \ f\cap \hbox{Supp} \ g\; .
	\]
	In this sense, this bracket generalizes the well-known Poisson brackets. Indeed, a Poisson manifold is a particular case of Jacobi manifold. 
	
	In local coordinates
	\begin{eqnarray*}
	\{f, g\}&=&\frac{\partial f}{\partial p_i}\frac{\partial g}{\partial q^i}-
	\frac{\partial f}{\partial q^i}\frac{\partial g}{\partial p_i}
	-\frac{\partial f}{\partial S}\left(p_i\frac{\partial g}{\partial p_i}-g\right)+\frac{\partial g}{\partial S}\left(p_i\frac{\partial f}{\partial p_i}-f\right)
	\end{eqnarray*}

	It is also interesting for us to introduce the bracket (\textit{Cartan bracket}) that now does not obey the Jacobi identity
	\begin{eqnarray*}
		[f, g]&=&\Lambda (df, dg)\\
		&=&\frac{\partial f}{\partial p_i}\frac{\partial g}{\partial q^i}-
		\frac{\partial f}{\partial q^i}\frac{\partial g}{\partial p_i}
		-\frac{\partial f}{\partial S}\left(p_i\frac{\partial g}{\partial p_i}\right)+\frac{\partial g}{\partial S}\left(p_i\frac{\partial f}{\partial p_i}\right)
	\end{eqnarray*}
		
	The main example of contact manifold for us will be  $T^*Q\times \R$, where  $Q$ is $n$-dimensional manifold, with  contact structure defined
	by
	\[
	\eta=pr_2^* (dS)-pr_1^* (\theta_Q)\equiv dS-\theta_Q
	\]
	where $pr_1: T^*Q\times \R\rightarrow T^*Q$ and $pr_2: T^*Q\times \R\rightarrow \R$  are the canonical projections and $\theta_Q$ is the Liouville 1-form on the cotangent bundle
	defined by
	\[
	\Theta_Q (X_{\mu_q})=\langle \mu_q, T_{\mu_{q}}\pi_Q X_{\mu_q}\rangle
	\]
		where $X_{\mu_q}\in T_{\mu_q} T^*Q$. 
Taking bundle  coordinates $(q^i,p_i)$ on $T^*Q$ we have that 
$\eta=dS-p_idq^i$.

On such a manifold we can define the bi-vector
\[
\Lambda_0=\Lambda+\sharp_{\Lambda}(dS)\wedge R
\]
which is Poisson, that is $[\Lambda_0, \Lambda_0]=0$. 
In coordinates, 
\[
\Lambda_0=\frac{\partial}{\partial p_i}\wedge \frac{\partial}{\partial q^i}
\]
is like the canonical Poisson bracket on $T^*Q$ but now applied to functions on $T^*Q\times \R$.  

Observe  that in this case the Cartan bracket can be rewritten in terms of the Poisson bracket  induced by  $\Lambda_0$ and an extra term that describe the thermodynamical behaviour. That is, 
\[
[f, g]=\{f, g\}_{\Lambda_0}-\frac{\partial f}{\partial S}\Delta g+\frac{\partial g}{\partial S}\Delta f
\]
where $\Delta=-\sharp_{\Lambda}(dS)$ is the Liouville vector field: 
\[
\Delta=p_i\frac{\partial}{\partial p_i}
\]		
We will denote by 
\[
\{
f, g
\}_{\Delta}=\frac{\partial g}{\partial S}\Delta f-\frac{\partial f}{\partial S}\Delta g
\]
	then the Cartan bracket is written as in the single generation formalism \cite{Ed-Ber, Ed-Ber-2}  as
	\begin{equation}\label{eee}
	[f, g]=\{f, g\}_{\Lambda_0}+\{
	f, g
	\}_{\Delta}
	\end{equation}
	
	Now, we will discuss some interesting properties of the qualitative behaviour of the evolution vector field $E_f$. In  \cite{BLMP} appears  a similar result for contact hamitonian vector fields (see also \cite{Godbillon1969}).

	\begin{proposition}\label{lie_dv_Ef_eta}
		We have that 
		$${\mathcal L}_{{\mathcal E}_f}\eta=-R(f)\eta +df\; .$$
	\end{proposition}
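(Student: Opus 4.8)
The plan is to use Cartan's magic formula $\mathcal{L}_{{\mathcal E}_f}\eta = i_{{\mathcal E}_f}d\eta + d\big(i_{{\mathcal E}_f}\eta\big)$ and to evaluate the two terms on the right separately, relying only on the defining identity $\flat({\mathcal E}_f)=df-R(f)\,\eta$ and on the characterization $i_R\eta=1$, $i_Rd\eta=0$ of the Reeb field.

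First I would establish that ${\mathcal E}_f$ is horizontal, i.e. $i_{{\mathcal E}_f}\eta=0$, so that the term $d\big(i_{{\mathcal E}_f}\eta\big)$ vanishes. To see this, contract the defining relation $\flat({\mathcal E}_f)=df-R(f)\,\eta$ with $R$. On the left, by the very definition of $\flat$ we have $\flat({\mathcal E}_f)(R)=i_{{\mathcal E}_f}d\eta(R)+\eta({\mathcal E}_f)\,\eta(R)=-i_Rd\eta({\mathcal E}_f)+\eta({\mathcal E}_f)=\eta({\mathcal E}_f)$, using $i_Rd\eta=0$ and $i_R\eta=1$. On the right, $\big(df-R(f)\,\eta\big)(R)=R(f)-R(f)=0$. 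Hence $\eta({\mathcal E}_f)=i_{{\mathcal E}_f}\eta=0$.

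Next I would compute $i_{{\mathcal E}_f}d\eta$. Again from the definition of $\flat$, $\flat({\mathcal E}_f)=i_{{\mathcal E}_f}d\eta+\eta({\mathcal E}_f)\,\eta$, and since we have just shown $\eta({\mathcal E}_f)=0$ this gives $i_{{\mathcal E}_f}d\eta=\flat({\mathcal E}_f)=df-R(f)\,\eta$. Substituting into Cartan's formula yields $\mathcal{L}_{{\mathcal E}_f}\eta=\big(df-R(f)\,\eta\big)+0=-R(f)\,\eta+df$, which is the claim. Alternatively, the identity can be checked directly in the canonical coordinates $(q^i,p_i,S)$ with $\eta=dS-p_i\,dq^i$, $d\eta=-dp_i\wedge dq^i$, $R=\partial/\partial S$, and the coordinate expression for ${\mathcal E}_f$ given above; one finds $i_{{\mathcal E}_f}\eta=p_i\frac{\partial f}{\partial p_i}-p_i\frac{\partial f}{\partial p_i}=0$ and $i_{{\mathcal E}_f}d\eta=\big(\frac{\partial f}{\partial q^i}+p_i\frac{\partial f}{\partial S}\big)dq^i+\frac{\partial f}{\partial p_i}dp_i=df-R(f)\,\eta$.

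There is no serious obstacle here: the only point requiring a little care is the horizontality $i_{{\mathcal E}_f}\eta=0$, which is exactly what distinguishes ${\mathcal E}_f$ from the contact Hamiltonian field $X_f$ (for which $i_{X_f}\eta=-f$), and it is obtained, as above, by pairing the defining equation with the Reeb vector field.
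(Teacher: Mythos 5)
Your proof is correct, but it follows a genuinely different route from the paper's. The paper decomposes ${\mathcal E}_f=X_f+fR$ and invokes the known identity ${\mathcal L}_{X_f}\eta=-R(f)\eta$ for the contact Hamiltonian vector field (citing Marle), so the computation reduces to ${\mathcal L}_{fR}\eta=(i_R\eta)\,df=df$. You instead work directly from the defining relation $\flat({\mathcal E}_f)=df-R(f)\eta$ and Cartan's magic formula, first extracting the horizontality $\eta({\mathcal E}_f)=0$ by pairing with $R$, and then reading off $i_{{\mathcal E}_f}d\eta=df-R(f)\eta$. Both arguments are sound; yours has the advantage of being self-contained (it does not presuppose the Lie-derivative formula for $X_f$) and of isolating the identity $i_{{\mathcal E}_f}\eta=0$, which is precisely what justifies the name ``horizontal vector field'' and is used implicitly elsewhere in the paper, while the paper's version is shorter once the standard property of $X_f$ is granted. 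Your coordinate verification is also correct and provides a useful sanity check.
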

	\begin{proof}
		The proof is a trivial consequence of the properties of the Lie derivative and the properties of the Hamiltonian vector field (see \cite{marle}): 
	\begin{eqnarray*}
		{\mathcal L}_{{\mathcal E}_f}\eta&=&{\mathcal L}_{X_f+fR}\eta={\mathcal L}_{X_f}\eta +{\mathcal L}_{fR}\eta\\
		&=&-R(f)\eta+(i_R\eta)df=-R(f)\eta +df
		\end{eqnarray*}
    \end{proof}
	
	\begin{theorem}
		Let ${\mathcal L}_c(f)=f^{-1}(c)$ be a level set of $f: M\rightarrow \R$ where $c\in \R$. We assume that ${\mathcal L}_c(f)\not=0$ and $R(f)(x)\not= 0$ for all $x\in {\mathcal L}_c(f)$. Then
		\begin{enumerate}
		\item The 2-form $\omega_c\in \Omega^2({\mathcal L}_c(f))$ defined by
		\[
		\omega_c=-d i_{c}^*\eta
		\]
		is an exact symplectic structure. Here
		 $i_c: {\mathcal L}_c f\hookrightarrow M$ denotes  the canonical inclusion 
	\item If $\Delta_c$ is the Liouville vector field, that is, 
	\[
	i_{\Delta_c}\omega_c=i_{c}^*\eta
	\]
	then the restriction of ${\mathcal E}_f$ to ${\mathcal L}_c(f)$ verifies that
	$${\mathcal E}_{f}\big|_{{\mathcal L}_c (f)}=R(f)\big|_{{\mathcal L}_c(f)} \Delta_c$$
		\end{enumerate}
    \end{theorem}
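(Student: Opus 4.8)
The plan is to handle the statement in three moves. First, since $df(R)=R(f)$ is nonzero at every point of ${\mathcal L}_c(f)$, the differential $df$ has no zeros along the level set, so $c$ is a regular value there; hence ${\mathcal L}_c(f)$ is an embedded submanifold of dimension $2n$ with $T_x{\mathcal L}_c(f)=\ker df_x$, and moreover $R_x\notin T_x{\mathcal L}_c(f)$, which gives the transverse splitting $T_xM=T_x{\mathcal L}_c(f)\oplus\langle R_x\rangle$. This transversality is the one structural fact that drives both parts.

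For part (1), note that $\omega_c=-d\,i_c^*\eta=-i_c^*d\eta$ is closed and exact by construction, so the only thing to prove is nondegeneracy, which I would do pointwise. If $v\in T_x{\mathcal L}_c(f)$ satisfies $d\eta(v,w)=0$ for all $w\in T_x{\mathcal L}_c(f)$, then since $i_Rd\eta=0$ it also satisfies $d\eta(v,R_x)=0$, so by the splitting above $d\eta_x(v,\cdot)=0$ on all of $T_xM$; as $\ker d\eta_x=\langle R_x\rangle$ this forces $v\in\langle R_x\rangle\cap T_x{\mathcal L}_c(f)=\{0\}$. Since ${\mathcal L}_c(f)$ is even-dimensional, $\omega_c$ is then a symplectic form, and being exact it in particular makes $\Delta_c$, defined by $i_{\Delta_c}\omega_c=i_c^*\eta$, well defined.

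For part (2), I would first observe that ${\mathcal E}_f$ is tangent to ${\mathcal L}_c(f)$, since ${\mathcal E}_f(f)=df(\sharp_{\Lambda}(df))=\Lambda(df,df)=0$ by skew-symmetry, so ${\mathcal E}_f$ is tangent to every level set of $f$. Next comes the key algebraic point, $\eta({\mathcal E}_f)=0$: evaluating $\flat(X_f)=df-(R(f)+f)\eta$ on $R$ (and using $i_Rd\eta=0$, $\eta(R)=1$) gives $\eta(X_f)=-f$, whence $\eta({\mathcal E}_f)=\eta(X_f+fR)=-f+f=0$; combined with $\flat({\mathcal E}_f)=i_{{\mathcal E}_f}d\eta+\eta({\mathcal E}_f)\eta=df-R(f)\eta$ this yields $i_{{\mathcal E}_f}(-d\eta)=-df+R(f)\eta$ on $M$. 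Pulling back along $i_c$, using $f\circ i_c\equiv c$ (so $i_c^*df=0$) together with the tangency of ${\mathcal E}_f$, one gets
\[
i_{{\mathcal E}_f|_{{\mathcal L}_c(f)}}\,\omega_c \;=\; i_c^*\!\left(i_{{\mathcal E}_f}(-d\eta)\right) \;=\; R(f)\big|_{{\mathcal L}_c(f)}\;i_c^*\eta \;=\; i_{\,R(f)|_{{\mathcal L}_c(f)}\,\Delta_c}\,\omega_c ,
\]
and nondegeneracy of $\omega_c$ then forces ${\mathcal E}_f|_{{\mathcal L}_c(f)}=R(f)|_{{\mathcal L}_c(f)}\,\Delta_c$, as claimed.

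I do not anticipate a genuine difficulty. The step deserving the most care is the nondegeneracy argument in part (1), which really uses \emph{both} $\ker d\eta=\langle R\rangle$ and the transversality $R\notin T{\mathcal L}_c(f)$ — dropping either makes $\omega_c$ degenerate — together with the small amount of $\flat$-bookkeeping needed to produce $\eta({\mathcal E}_f)=0$ and $i_{{\mathcal E}_f}d\eta=df-R(f)\eta$. Everything after that is the standard ``contract, restrict, invert the symplectic form'' routine, with the constancy of $f$ on ${\mathcal L}_c(f)$ killing the leftover $df$ term.
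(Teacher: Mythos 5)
Your proof is correct and follows essentially the same route as the paper: nondegeneracy of $\omega_c$ via the transversality of $R$ to $T{\mathcal L}_c(f)$ together with $\ker d\eta=\langle R\rangle$, then the identity $i_{{\mathcal E}_f}d\eta=df-R(f)\eta$ pulled back along $i_c$ with the $df$ term killed by constancy of $f$ on the level set. The only (welcome) difference is that you verify $\eta({\mathcal E}_f)=0$ explicitly before invoking $\flat({\mathcal E}_f)=df-R(f)\eta$, a step the paper leaves implicit in its appeal to Cartan's identity.
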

    
    \begin{proof}
	The form $\omega_c$ is trivially closed. To see that it is a symplectic form, we just need to check that is non degenerate. Let $p \in {\mathcal L}_c (f)$. Notice that, at that point, $\omega_c = -d \eta\vert_{ T_p {\mathcal L}_c (f)}$. By the condition $R(f) \neq 0$, we have that $R_p$ (and, hence $\ker \eta = \hbox{span }\langle R \rangle$) is transverse to $T_p {\mathcal L}_c (f)$. But since $\eta_p \wedge d \eta_p^n \neq 0$,then $d \eta\vert_{ V}$ is non-degenerate for every subspace $V$ transverse to $\ker \eta$. Therefore, $\omega_c$ is also non-degenerated. 

	For the second part, we first remark that ${\mathcal E}_f(f) = 0$, hence ${(i_c)}_* {\mathcal E}_f = {\mathcal E}_f\vert_{{\mathcal L}_c (f)}$ is a well-defined vector field. By~Proposition~\ref{lie_dv_Ef_eta} and Cartan's identity
	\[
		i_{{\mathcal E}_f} d \eta = -R(f) \eta + df.
	\]
	Pulling back by $i_c$, we get
	\[
		i_{{(i_c)}_*{\mathcal E}_f} i_c^*d\eta = 
		-(R(f)\circ i_c) i_c^*\eta+ d i_c^*f = -(R(f)\circ i_c) i_c^*\eta,
	\]
	dividing by $-(R(f)\circ i_c) $,
	\[
		-i_{{(i_c)}_*{\mathcal E}_f/R(f)} i_c^*d\eta =i_{{(i_c)}_*{\mathcal E}_f/R(f)}\omega_c=i_{c}^*\eta.
	\]
	Thus, ${(i_c)}_*\left({\mathcal E}_f/R(f)\right) = \Delta_c$, as we wanted to show.
	\end{proof}
	Observe that since 	$${\mathcal E}_{f}\big|_{{\mathcal L}_c (f)}=R(f)\big|_{{\mathcal L}_c(f)} \Delta_c$$
	then the dynamics on each energy level is  like a Liouville dynamics after a time reparametrization $$dt = \frac{1}{R(f)}d\tau\; .$$
	
	\section{Simple mechanical systems with friction}\label{sec3}
	In this section, we will describe using the evolution vector field simple thermodynamic systems, that is systems  for which one scalar thermal variable (in our case the entropy)  and a finite set of mechanical variables (position and momenta) are enough to describe all the possible  states of the system. We assume that the system is adiabatically closed, that is, systems where there is not associated  transfer outside of work, matter  or heat. 
	That is, we consider adiabatically closed thermodynamic	systems.
	In this case, the thermodynamical simple systems are described by a Lagrangian  function:
	\[
	\begin{array}{rrcl}
	L:& TQ\times \R&\longrightarrow& \R\\
	   & (v_q, S)&\longmapsto& L(v_q, S)
	\end{array}
	\]
	where $Q$ is the configuration manifold describing the mechanical part of the thermodynamical system, $TQ$ the tangent bundle with canonical projection $\tau_Q: TQ\rightarrow Q$ given by $\tau_Q(v_q)=q$. 
	The entropy of the system is described by the real variable $S\in \R$. 
	If we consider coordinates $(q^i)$ on $Q$ and induced coordinates $(q^i, \dot{q}^i)$ on $TQ$ then $\tau_Q(q^i, \dot{q}^i)=(q^i)$.
	
	We will see that the Lagrangian function itself will produce a friction force satisfying naturally the two laws of thermodynamics. 
	
	We will assume that the Lagrangian system is regular, that is, the matrix
	\[
	(W_{ij})=\left(\frac{\partial^2 L}{\partial \dot{q}^i\partial \dot{q}^j}\right)
	\]
	is regular and the mapping $\F L: TQ\times \R\rightarrow T^*Q\times \R$ is a local diffeomorphism, where:
	\[
	 \F L(q^i, \dot{q}^i, S)=(q^i, \frac{\partial L}{\partial \dot{q}^i}, S)
	\]
	is the Legendre transform. Then, we may define a Hamiltonian function $H: T^*Q\times \R\rightarrow \R$ given by
	\[
	H(q^i, p_i, S)=p_i \dot{q}^i-L(q^i, \dot{q}^i, S)
	\]  
	where now the coordinates $\dot{q}^i$ are  implicitly  defined by the relations $p_j=\frac{\partial L}{\partial \dot{q}^j}(q^i, \dot{q}^i, S)$.

The equations of motion defined by the evolution vector field ${\mathcal E}_H$ are
\begin{eqnarray*}	
\frac{dq^i}{dt}&=&
	\frac{\partial H}{\partial p_i}\; ,\\
	\frac{dp_i}{dt}&=&
	-\frac{\partial H}{\partial p_i}-p_i\frac{\partial H}{\partial S}\\
	\frac{dS}{dt}&=&
	p_i\frac{\partial H}{\partial p_i}\; .
	\end{eqnarray*}

	The vector field ${\mathcal E}_H$ satisfies the following two  properties that are related with thermodynamical   systems  that conserves its energy, but redistributes it in an irreversible way, that property is collected by the variable $S$, the entropy of the system. 
	
\begin{proposition}\label{pro11}
	The integral curves of ${\mathcal E}_H$ satisfies the following properties: 
	\begin{enumerate}
		\item ${\mathcal E}_H(H)=0$, that is, $\frac{d H}{dt}=0$; 
		\item ${\mathcal E}_H(S)=\Delta (H)$, that is, $\frac{dS}{dt}=\Delta H$. 
	\end{enumerate}
	\end{proposition}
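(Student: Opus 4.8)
The plan is to verify both identities by a direct computation, exploiting the defining relation $\mathcal{E}_H = \sharp_{\Lambda}(dH)$ together with the skew-symmetry of the bi-vector $\Lambda$; I will also record the coordinate form, which makes the thermodynamical reading transparent and guards against sign slips.

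For the first property I would argue intrinsically:
\[
\mathcal{E}_H(H) = \langle dH, \sharp_{\Lambda}(dH)\rangle = \Lambda(dH, dH) = 0,
\]
where the middle equality is the definition of $\sharp_{\Lambda}$ and the last one is skew-symmetry of $\Lambda$. Equivalently, pairing the coordinate expression of $\mathcal{E}_H$ displayed above with $dH$, the terms $\frac{\partial H}{\partial p_i}\frac{\partial H}{\partial q^i}$ and $-\frac{\partial H}{\partial q^i}\frac{\partial H}{\partial p_i}$ cancel against each other, and so do $-p_i\frac{\partial H}{\partial S}\frac{\partial H}{\partial p_i}$ and $p_i\frac{\partial H}{\partial p_i}\frac{\partial H}{\partial S}$. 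Since $\frac{dH}{dt} = \mathcal{E}_H(H)$ along an integral curve, this is the first law.

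For the second property I would use the identification $\Delta = -\sharp_{\Lambda}(dS)$ recorded in the previous section. Then
\[
\mathcal{E}_H(S) = \langle dS, \sharp_{\Lambda}(dH)\rangle = \Lambda(dH, dS) = -\Lambda(dS, dH) = \langle dH, -\sharp_{\Lambda}(dS)\rangle = \Delta(H),
\]
again using only the definition of $\sharp_{\Lambda}$ and skew-symmetry. In coordinates this is just the remark that the $\frac{\partial}{\partial S}$-component of $\mathcal{E}_H$ equals $p_i\frac{\partial H}{\partial p_i}$, which is precisely $\Delta(H)$ for $\Delta = p_i\frac{\partial}{\partial p_i}$; along an integral curve this reads $\frac{dS}{dt} = \Delta H$.

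I do not expect any genuine obstacle: both claims follow at once from the antisymmetry of $\Lambda$ together with the identities $\mathcal{E}_H = \sharp_{\Lambda}(dH)$ and $\Delta = -\sharp_{\Lambda}(dS)$ already established. The only point requiring a little care is the placement of the arguments of $\Lambda$ dictated by the convention $\langle \beta, \sharp_{\Lambda}(\alpha)\rangle = \Lambda(\alpha,\beta)$, and the coordinate computation is exactly what I would use to double-check the signs.
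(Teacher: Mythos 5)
Your proof is correct and takes essentially the same approach as the paper, which simply states that both claims are ``a consequence of the definition of the evolution vector field ${\mathcal E}_H=\sharp_{\Lambda}(dH)$''; you have merely made explicit the skew-symmetry argument $\Lambda(dH,dH)=0$ and the identity $\Delta=-\sharp_{\Lambda}(dS)$ that the paper leaves implicit. Your handling of the argument-order convention $\langle\beta,\sharp_{\Lambda}(\alpha)\rangle=\Lambda(\alpha,\beta)$ and the coordinate cross-check are both accurate.
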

\begin{proof}
	Both are consequence of the definition of the evolution vector field ${\mathcal E}_H=\sharp_{\Lambda} (dH)$.
\end{proof}

Assume that the Hamiltonian $H$ is given by
\begin{equation}\label{hami}
H(q^i, p_i, S)=\frac{1}{2} g^{ij}p_ip_j+V(q, S)
\end{equation}
where $(g^{ij})$ is positive  semi-definite (for instance, it is associated to a Riemannian metric on $Q$). 
Then, the  vector field ${\mathcal E}_H$ describes a thermodynamical system with friction satisfying the first two laws of the thermodynamics: 
\begin{proposition}
	The integral curves of ${\mathcal E}_H$ satisfies the following properties: 
	\begin{enumerate}
		\item {\bf First law of Thermodynamics}:
		\begin{equation*}
			\frac{d H}{dt}=0 \quad \text{(preservation of the total energy);}
		\end{equation*} 
		\item {\bf Second law of Thermodynamics}:
		\begin{equation*}
		\frac{dS}{dt}=\Delta H\geq 0 \quad \text{(total entropy of an isolated system never decreases).}
		\end{equation*} 
	\end{enumerate}
\end{proposition}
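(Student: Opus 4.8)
The plan is to reduce both statements to Proposition~\ref{pro11}, which already records the two structural identities ${\mathcal E}_H(H)=0$ and ${\mathcal E}_H(S)=\Delta(H)$ valid for an arbitrary Hamiltonian. The first law is then immediate and requires no special form of $H$: along any integral curve of ${\mathcal E}_H$ one has $\frac{dH}{dt}={\mathcal E}_H(H)=0$, so the total energy is a constant of the motion. The only ingredient behind this is the skew-symmetry of the bi-vector $\Lambda$, since ${\mathcal E}_H=\sharp_\Lambda(dH)$ gives ${\mathcal E}_H(H)=\langle dH,\sharp_\Lambda(dH)\rangle=\Lambda(dH,dH)=0$.

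For the second law I would start again from $\frac{dS}{dt}=\Delta(H)$, supplied by Proposition~\ref{pro11}, and simply evaluate $\Delta(H)$ for the mechanical Hamiltonian \eqref{hami}. Using the coordinate expression $\Delta=p_i\,\frac{\partial}{\partial p_i}$ of the Liouville vector field, the potential term $V(q,S)$ is annihilated because it does not depend on the momenta, so
\[
\Delta(H)=p_i\frac{\partial H}{\partial p_i}=p_i\,g^{ij}p_j=g^{ij}p_ip_j .
\]
The proof then closes by invoking the hypothesis that $(g^{ij})$ is positive semi-definite: $g^{ij}p_ip_j\ge 0$ at every point, hence $\frac{dS}{dt}\ge 0$ along any integral curve of ${\mathcal E}_H$, which is the asserted monotonicity of the entropy.

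I do not expect any genuine obstacle here: the entire content is carried by the geometric identities established earlier, and the remaining work is the one-line computation of $\Delta(H)$. The only point worth emphasizing, rather than a difficulty, is that the entropy production rate coincides exactly with (twice) the kinetic energy $\tfrac12 g^{ij}p_ip_j$; in particular it vanishes precisely on the zero section $p=0$, which identifies the equilibria of the system and shows that the inequality in the second law is strict away from vanishing momentum.
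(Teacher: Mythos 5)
Your proof is correct and follows exactly the paper's route: the paper likewise derives both laws directly from Proposition~\ref{pro11} together with the computation $\Delta H = g^{ij}p_ip_j \geq 0$ from positive semi-definiteness of $(g^{ij})$. One small caveat on your closing aside: strict positivity of the entropy production away from $p=0$ requires $(g^{ij})$ to be positive \emph{definite}, not merely semi-definite as hypothesized, but this does not affect the proof of the stated proposition.
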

\begin{proof}
	It is a direct consequence of Proposition \ref{pro11} and $\Delta H= p_ig^{ij}p_j\geq 0$.  
\end{proof}	

If we express the dynamics in terms of the brackets defined in (\ref{eee}) we have that 
	\begin{equation}\label{poi}
	\dot{f}=\{f, H\}_{T^*Q}+\{f, H\}_{\Delta}.
	\end{equation}
Obviously, $\{H, H\}_{T^*Q}=\{H, H\}_{\Delta}=0$  (first law) and 
$\{S, H\}_{T^*Q}=0$ and $\{S, H\}_{\Delta}=\Delta H\geq 0$ (second law). Observe that in Equation (\ref{poi}) both brackets are using the function $H$ as "generator". This is the reason that typically this formalism is known as {\sl single generator formalism} \cite{Ed-Ber}.

\begin{example}{\bf Linearly damped system}
{\rm 
	
 Consider a linearly damped system described by coordinates  $(q, p, S)$, where $q$ represents the position, $p$ the momentum of the particle and $S$ is the entropy of the surrounding thermal bath. 
The system is described by the  Hamiltonian 
\[
H(q, p, S)=\frac{p^2}{2m}+V(q)+\gamma S
\]
Therefore, the equations of motion for ${\mathcal E}_H$ are: 
\begin{eqnarray*}
	\dot{q}&=&\frac{p}{m}\\
	\dot{p}&=&-V'(q)-\gamma p\\
	\dot{S}&=& \frac{p^2}{m}
	\end{eqnarray*}
Obviously $\dot{H}=0$ and $\dot{S}\geq 0$. 

In the Lagrangian side we obtain the system: 
\begin{eqnarray*}
m\ddot{q}+\gamma \dot{q}+V'(q)&=&0\\
	\dot{S}&=& m\dot{q}^2
\end{eqnarray*}
observe that in this system the friction term is given by the 1-form $F_{fr}(q, \dot{q})=-\gamma \dot{q} dq$
Therefore the equation of temporal evolution of the entropy can be rewritten as follows
\[
\dot{S}=-\frac{1}{T} \langle F_{fr}(q, \dot{q}), \dot{q}\rangle
\]
where $T=\frac{\partial H}{\partial S}=-\frac{\partial L}{\partial S}=\gamma$ represents the temperature of the thermal bath. (see \cite{Gay-Balmaz2017, Gay-Balmaz2019}).

Obseve that the two brackets are: 
\begin{eqnarray*}
	\{f, g\}_{\Lambda_0}&=&\frac{\partial f}{\partial p}\frac{\partial g}{\partial q}-\frac{\partial g}{\partial p}\frac{\partial f}{\partial q}\\
		\{f, g\}_{\Delta}&=&p\frac{\partial g}{\partial S}\frac{\partial f}{\partial p}-p\frac{\partial f}{\partial S}\frac{\partial g}{\partial p}
\end{eqnarray*}
In particular
\begin{eqnarray*}
	\{H, g\}_{\Lambda_0}&=&\frac{p}{m}\frac{\partial g}{\partial q}-\frac{\partial g}{\partial p}V'(q)\\
	\{H, g\}_{\Delta}&=&\frac{p^2}{m}\frac{\partial g}{\partial S}-\gamma p\frac{\partial g}{\partial p}
\end{eqnarray*}
and
\[
{\mathcal E}_H(g)=\dot{g}=\{H, g\}_{\Lambda_0}+\{H, g\}_{\Delta}
\]
Therefore it is clear that $\{H, H\}_{\Lambda_0}=0$ and $\{H, H\}_{\Delta}=0$ (by skew-symmetry) and 
$\{H, S\}_{\Lambda_0}=0$ and 
$\{H, S\}_{\Delta}=\frac{p^2}{m}\geq 0$.
}
\end{example}
\section{Geometric integration of simple thermodynamical systems}\label{sec4}

\subsection{Integration based on discrete gradients}

Numerical methods for general thermodynamical systems are implemented usually  using the metriplectic formalism (see \cite{mielke, ignacio}), however in our case, for the examples that we are considering,   we can easily adapt the construction of discrete gradient methods to the bivector $\Lambda$.

For  simplicity, we will assume that $Q=\R^{N}$. Then the systems that we want to study are described by the  ODEs 
$$\dot{x}=(\sharp_{\Lambda})_x (\nabla H(x)),$$
with $x=(q^i, p_i, S)\in \mathbb{R}^{2n+1}$ and $\nabla H(x)\in\mathfrak{X}(Q)$ is the standard gradient in $\R^{N}$ with respect to the euclidean metric.

Using discretizations of the gradient $\nabla H(x)$ it is possible to define a class of integrators which preserve the first integral $H$ exactly.

\begin{definition}\label{def31}
	Let $H:\mathbb{R}^N\longrightarrow \mathbb{R}$ be a differentiable function. Then $\bar{\nabla}H:\mathbb{R}^{2N}\longrightarrow \mathbb{R}^N$ is a discrete gradient of $H$ if it is continuous and satisfies
	\begin{subequations}
		\label{discGrad}
		\begin{align}
		\bar{\nabla}H(x,x')^T(x'-x)&=H(x')-H(x)\, , \quad \, \mbox{ for all } x,x' \in\mathbb{R}^N  \, ,\label{discGradEn} \\
		\bar{\nabla}H(x,x)&=\nabla H(x)\, , \quad \quad \quad \quad \mbox{ for all } x \in\mathbb{R}^N  \, . \label{discGradCons}
		\end{align}
	\end{subequations}
\end{definition}

Some examples of discrete gradients are 
\begin{itemize}
	\item The {\bf mean value (or averaged) discrete gradient} given by
	\begin{equation}
	\label{AVF}
	\bar{\nabla}_{1}H(x,x'):=\int_0^1 \nabla H ((1-\xi)x+\xi x')d\xi \, , \quad \mbox{ for } x'\not=x \, .
	\end{equation}
	
	\item The {\bf midpoint (or Gonzalez) discrete gradient} given by
	\begin{align}
	\bar{\nabla}_{2}H(x,x')&:=\nabla H\left( \frac{1}{2}(x'+x)\right)\label{gonzalez}\\&+\frac{H(x')-H(x)
		-\nabla H\left( \frac{1}{2}(x'+x)\right)^T(x'-x)}{|x'-x|^2}(x'-x) \, , \nonumber
	\end{align}
for  $x'\not=x$. 
	\item The {\bf coordinate increment discrete gradient} where  each component given by
	\begin{equation}
	\label{itoAbe}
	\bar{\nabla}_{3}H(x,x')_i=\frac{H(x'_1,\ldots,x'_i,x_{i+1},\ldots,x_n)-H(x'_1,\ldots,x'_{i-1},x_{i},\ldots,x_n)}{x'_i-x_i}\, \nonumber
	\end{equation}
	$1\leq i \leq N$,  
	when $x_i'\not=x_i$, and $$\bar{\nabla}_{3}H(x,x')_i=\frac{\partial H}{\partial x_i}(x'_1,\ldots,x'_{i-1},x'_i=x_{i},x_{i+1},\ldots,x_n),$$ otherwise.
\end{itemize}

Once a discrete gradient $\bar{\nabla}H$ has been chosen, it is straightforward to define an energy-preserving integrator by, for instance, using the midpoint discrete gradient: 
\begin{equation}\label{discrete:gradient:integrator}
\frac{x_{k+1}-x_k}{h}=(\sharp_{\Lambda})_{(x_k+x_{k+1})/2}\bar{\nabla}_2H(x_k,x_{k+1}),
\end{equation}
where $\Lambda$ is the bivector associated to the canonical contact structure $\eta_{Q}$ of $Q=\R^{2n+1}$, given in local coordinates by \eqref{Lambda:coordinates}.

As in the continuous case, it is immediate to check that $H$ is exactly preserved using \eqref{discrete:gradient:integrator} and the skew-symmetry of $\Lambda$
\begin{equation*}
	\begin{split}
		H(x_{k+1})-H(x_k) & =\bar{\nabla}_2H(x_k,x_{k+1}')^T(x_{k+1}-x_k) \\
		& =h \Lambda (\bar{\nabla}_2H(x_k,x_{k+1}),  \bar{\nabla}_2H(x_k,x_{k+1}))=0.
	\end{split}
\end{equation*}

On the other hand, by \eqref{discrete:gradient:integrator} the entropy satisfies
$$
S_{k+1}-S_k=h\Lambda (\bar{\nabla}_2H(x_k,x_{k+1}), dS). 
$$
If $H$ is of the form (\ref{hami}) with $V$ a quadratic function then 
\[
H(x_{k+1})-H(x_k)=dH\left(\frac{x_k+x_{k+1}}{2}\right) (x_{k+1}-x_k).
\]
In fact this is a well-known property of quadratic functions. Hence, we must have
$$d H\left(\frac{x_k+x_{k+1}}{2}\right)=\bar{\nabla}_2 H(x_k,x_{k+1}),$$
so that
$$
S_{k+1}-S_k=h\Lambda \left(d H\left(\frac{x_k+x_{k+1}}{2}\right), dS\right)=h\frac{p_{k}^{i}+p_{k+1}^{i}}{2}\frac{\partial H}{\partial p^{i}}\left(\frac{x_k+x_{k+1}}{2}\right)\geq 0,$$
since by \eqref{Lambda:coordinates} we have that
\begin{equation*}
	\Lambda(dq^{i},dS)=0, \quad \Lambda(dp_{i},dS)=p_{i} \quad \text{and} \quad \Lambda(dS,dS)=0.
\end{equation*}

\begin{example}
	Consider the Hamiltonian function $H:T^{*}Q\rightarrow \R$ given by
	\begin{equation}\label{harmonic:osc}
	H(q,p,S)=\frac{p^{2}}{2}+\frac{q^{2}}{2}+\gamma S,
	\end{equation}
	where $Q=\R$, which is the Hamiltonian function associated with the damped harmonic oscillator.
	
	Now, if we may apply the midpoint discrete gradient and the associated integrator given by \eqref{discrete:gradient:integrator}, we obtain the following integrator
	\begin{equation}\label{DG2:harmonic}
	\begin{split}
	q_{1} = & \frac{2 \gamma h q_{0}-h^2 q_{0}+4 h p_{0}+4 q_{0}}{2 \gamma h+h^2+4} \\
	p_{1} = & -\frac{2 \gamma h p_{0}+h^2 p_{0}+4 h q_{0}-4 p_{0}}{2 \gamma h+h^2+4} \\
	S_{1} = & \frac{S_{0} h^4+(4 S_{0} \gamma+4 q_{0}^2) h^3+(4 S_{0} 	\gamma^2-16 p_{0} q_{0}+8 S_{0}) h^2}{(2 \gamma h+h^2+4)^2} \\
	& +\frac{(16 S_{0} \gamma+16 p_{0}^2) h+16 S_{0}}{(2 \gamma h+h^2+4)^2}.
	\end{split}
	\end{equation}
	
	Of course, using equations \eqref{DG2:harmonic} we obtain an integrator with constant energy and increasing entropy. In figures \ref{fig:test1} we can see that the qualitative behaviour of the integrator is fairly accurate, while in \ref{fig:test2} we see the entropy increases at the same rate as the exact one.
	
	\begin{figure}[htb!]
		\centering
		\includegraphics[width=0.7\linewidth]{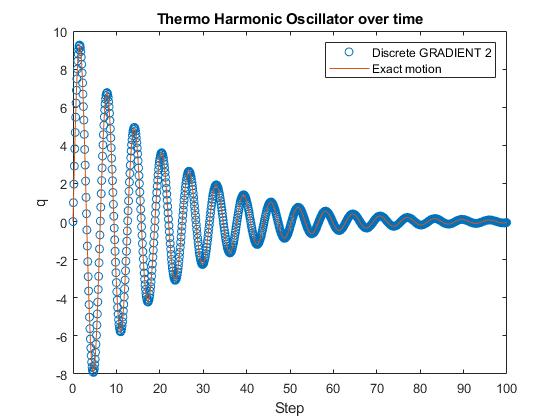}
		\caption{Trajectory of \eqref{DG2:harmonic}: the initial data are $q_{0}=0$, $p_{0}=10$ and $S_{0}=0$; the step is $h=0.1$ and $\gamma=0.1$. We plot the positions $q_{k}$ and compare the integrator with the integral curve of the evolution dynamics ${\mathcal E}_{H}$.}
		\label{fig:test1}
	\end{figure}
	
	\begin{figure}[htb!]
		\centering
		\includegraphics[width=0.7\linewidth]{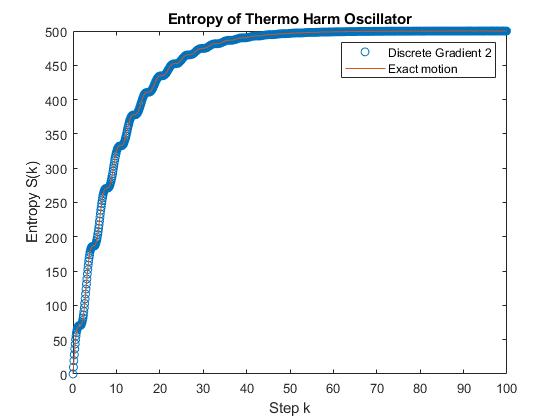}
		\caption{Error of \eqref{DG2:harmonic}: using the same initial data and settings from Figure \ref{fig:test1}, we plot the error with respect to the exact motion.}
		\label{fig:test2}
	\end{figure}
	
\end{example}

\section{Conclusions and future work}

We have shown the importance of the evolution or horizontal vector field to describe simple thermodynamical systems. We have proven that the restriction of this vector field to constant energy hypersurfaces is a time reparametrization of a Liouville vector field. Also, the relation with the single generation formalism of \cite{Ed-Ber} is elucidated and the construction of geometric integrators satisfying  the two laws of thermodynamics. 

Of course, our techniques are applied only to simple thermodynamical systems but we consider them to be the building blocks to model more evolved thermodynamical systems using interconnection of these simple systems as in  \cite{vander0}. We will study this framework in a future paper.  

Moreover, we will study the possibility of introducing the techniques developed in discrete mechanics, in particular, variational integrators, to numerically integrate the equations of the evolution vector field associated to a given Lagrangian function $L : TQ \times \R \longrightarrow \R$.
This would allow us to develop higher order  methods in a simple way as in \cite{marsden-west}.  In recent papers such as \cite{AMLL,VBS} a discrete Herglotz principle is introduced, allowing to obtain integrators for Lagrangian contact systems. We think that it is possible to adapt the previous constructions to the case of evolution vector fields. We will now develop some of the lines of this future research. 

\subsection{The geometric setting}

Let $L : TQ \times \R \longrightarrow \R$ be a regular Lagrangian function as in Section \ref{sec3} (see \cite{deLeon2019,deLeon2020}). As before, let us introduce coordinates on $TQ \times \R$, denoted by $(q^i, \dot{q}^i, S)$, where $(q^i)$ are coordinates in $Q$, $(q^i, \dot{q}^i)$ are the induced bundle coordinates in $TQ$	and $S$ is a global coordinate in $\R$.

Given a Lagrangian function $L$, using the canonical endomorphism ${\mathbf S}$ on $TQ$ locally defined by
$$	{\mathbf S}= d q^i \otimes \frac{\partial}{\partial \dot{q}^i},	$$
one can construct a 1-form $\lambda_L$ on $TQ\times \R$ given by
$$	\lambda_L = {\mathbf S}^* (dL)	$$	
where now ${\mathbf S}$ and ${\mathbf S}^*$ are the natural extensions of ${\mathbf S}$ and its adjoint operator ${\mathbf S}^*$ to $TQ \times \R$ \cite{deLeon1987}.

Therefore, we have that
$$	\lambda_L = \frac{\partial L}{\partial \dot{q}^i} \, dq^i.	$$

Now, the 1-form on $TQ\times \R$ given by $\eta_L = dS -\lambda_L$ or, in local coordinates, by	
$$	\eta_L = dS - \frac{\partial L}{\partial \dot{q}^i} \, dq^i	$$
is a contact form on $TQ \times \R$ if and only if $L$ is regular; indeed, if $L$ is regular, then we may prove that	
$	\eta_L \wedge (d\eta_L)^n \not= 0$,
and the converse is also true. 

The corresponding Reeb vector field is given in local coordinates by
$$	{\mathcal R}_L = \frac{\partial}{\partial S} - W^{ij} \frac{\partial^2 L}{\partial \dot{q}^j \partial S} \, \frac{\partial}{\partial \dot{q}^i} ,$$
where $(W^{ij})$ is the inverse matrix of the Hessian $(W_{ij})$.

The energy of the system is defined by 
$$E_L = \Delta (L) - L	$$
where $\Delta = \dot{q}^i \, \frac{\partial}{\partial \dot{q}^i}$ is the natural extension of the Liouville vector field on $TQ$ to $TQ \times \R$. Therefore, in local coordinates we have that	
$$E_L = \dot{q}^i \, \frac{\partial L}{\partial \dot{q}^i} - L.$$

Denote by $\flat_L : T(TQ \times \R) \longrightarrow T^* (TQ \times \R)$	the vector bundle isomorphism given by	
$$\flat_L (v) = i_v (d\eta_L) + (i_v \eta_L) \, \eta_L$$
where $\eta_L$ is the contact form on $TQ \times \R$ previously defined. We shall denote its inverse isomorphism by $\sharp_L = (\flat_L)^{-1}$.

Let ${\xi}_L$ be the unique vector field satisfying the equation	
\begin{equation}\label{clagrangian1}
	\flat_L ({\xi}_L) = dE_L - (\mathcal R_L E_L + E_L) \, \eta_L.
\end{equation}	



A direct computation from eq. (\ref{clagrangian1}) shows that if $(q^i(t), \dot{q}^i(t), S(t))$ is an integral curve of ${\xi}_L$, then it satisfies the generalized Euler-Lagrange equations considered by G. Herglotz in 1930:
\begin{equation}\label{clagrangian4}
	\begin{split}
		& \frac{d}{dt} \left(\frac{\partial L}{\partial \dot{q}^i}\right) - \frac{\partial L}{\partial q^i} = \frac{\partial L}{\partial \dot{q}^i} \frac{\partial L}{\partial S}\; ,\\
		& \dot{S}=L(q^i, \dot{q}^i, S)\; .
	\end{split}
\end{equation}

Now, given a regular Lagrangian function $L$, we may define the bi-vector $\Lambda_{L}$ on $TQ\times\R$ as in \eqref{Lambda:intrinsic} associated to the contact form $\eta_{L}$. That is, 
\begin{equation}\label{Lambda:intrinsic-1}
\Lambda_L(\alpha, \beta)=-d\eta_L(\flat_L^{-1}(\alpha), \flat_L^{-1}(\beta)), \qquad \alpha, \beta \in \Omega^1(TQ\times \R)\; .
\end{equation}

If $(q^i(t), \dot{q}^i(t), S(t))$ is an integral curve of the evolution vector field ${\mathcal E}_{L}$ associated to the contact form $\eta_{L}$ defined by
\begin{equation*}
	{\mathcal E}_{L}=\sharp_{\Lambda_{L}}(dE_{L}) \hbox{   or   }  	\flat_L ({\xi}_L) = dE_L - (\mathcal R_L E_L) \, \eta_L\; ,
\end{equation*}
then it satisfies the thermodynamical Herglotz equations
\begin{equation}\label{Herglotz:thermo}
	\begin{split}
		& \frac{d}{dt} \left(\frac{\partial L}{\partial \dot{q}^i}\right) - \frac{\partial L}{\partial q^i} = \frac{\partial L}{\partial \dot{q}^i} \frac{\partial L}{\partial S}. \\
		& \dot{S}=\dot{q}^i\frac{\partial L}{\partial \dot{q}^i}.
	\end{split}
\end{equation}

Moreover, if $H$ is the Hamiltonian function defined by $H=E_{L}\circ (\F L)^{-1}$, where $\F L:TQ\times \R\rightarrow T^{*}Q\times \R$ is the Legendre transform, then the evolution vector field ${\mathcal E}_{H}$ associated to $H$ is $\F L$-related to ${\mathcal E}_{L}$.


\subsection{Integration based on discrete Herglotz principle}

Now, we propose   to construct a numerical integrator for ${\mathcal E}_L$  based on a similar method to the discrete Herglotz principle \cite{AMLL, VBS}. 

Let $L_{d}:Q\times Q\times \R\rightarrow \R$ be a discrete Lagrangian function. Then a possible  integrator for the evolution dynamics is
\begin{equation}\label{DHE}
	D_{1}L_{d}(q_{1},q_{2},S_{1})+(1+D_{S}L_{d}((q_{1},q_{2},S_{1}))D_{2}L_{d}(q_{0},q_{1},S_{0})=0
\end{equation}
and the entropy is subjected to
\begin{equation}\label{Herglotz:entropy}
	S_{1}-S_{0}=(q_{1}-q_{0})D_{2}L_{d}(q_{0},q_{1},S_{0}).
\end{equation}

\begin{example}
	Consider again the Hamiltonian function \eqref{harmonic:osc} of the damped harmonic oscillator. Since $H$ is regular, we might consider the corresponding Lagrangian function $L:TQ\times \R\rightarrow \R$ given by
	\begin{equation*}
		L(q,\dot{q},S)=\frac{\dot{q}^{2}}{2}-\frac{q^{2}}{2}-\gamma S.
	\end{equation*}
	A standard discretization of this Lagrangian function is given by means of a quadrature rule like
	\begin{equation*}
		L_{d}(q_{0},q_{1},S_{0})=\frac{(q_{1}-q_{0})^{2}}{2h}-h\frac{(q_{1}+q_{0})^{2}}{8}-h \gamma S_{0}.
	\end{equation*}
	The discrete Herglotz equations \eqref{DHE} together with \eqref{Herglotz:entropy} give the explicit integrator
	\begin{equation}\label{Herglotz:harmonic}
		\begin{split}
			& q_{2} = \frac{\gamma h^3 q_{0}+\gamma h^3 q_{1}+4 \gamma h q_{0}-4 \gamma h q_{1}-h^2 q_{0}-2 h^2 q_{1}-4 q_{0}+8 q_{1}}{h^2+4} \\
			& S_{1} = S_{0} + \frac{(q_{1}-q_{0})^2}{h}-h \frac{q_{1}^{2}-q_{0}^{2}}{4}.
		\end{split}
	\end{equation}
	
	In Figures \ref{fig:test3} we plot the integrator given by equations \eqref{Herglotz:harmonic}. We see that the qualitative behaviour of the integrator is also quite good. In fact, an open question is whether the error can be improved by considering discrete Lagrangian functions approximating well enough the exact discrete Lagrangian function.
	
	As a last comment, the entropy for equations \eqref{Herglotz:harmonic} is increasing and the Hamiltonian oscillates before stabilizing around a constant value (cf. Fig \ref{fig:test5}).

	\begin{figure}[htb!]
		\centering
		\includegraphics[width=0.7\linewidth]{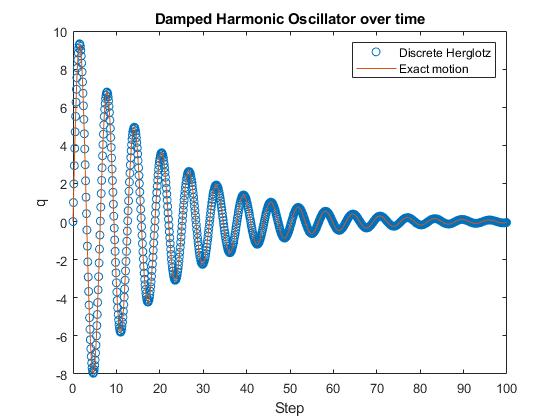}
		\caption{Trajectory of \eqref{Herglotz:harmonic}: the initial data are $q_{0}=0$, $q_{1}=1$ and $S_{0}=0$; the step is $h=0.1$ and $\gamma=0.1$. We plot the positions $q_{k}$ and compare the integrator with the integral curve of the evolution dynamics ${\mathcal E}_{L}$.}
		\label{fig:test3}
	\end{figure}
	
	\begin{figure}[htb!]
		\centering
		\includegraphics[width=0.7\linewidth]{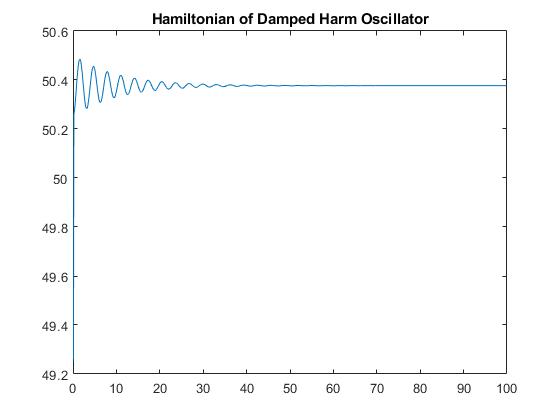}
		\caption{Hamiltonian of \eqref{Herglotz:harmonic}: using the same initial data and settings from Figure \ref{fig:test3}, we plot the Hamiltonian function along the iterations of the integrator.}
		\label{fig:test5}
	\end{figure}
	
\end{example}

\section*{Acknowledgments}

The authors  are supported  by  Ministerio de Ciencia e Innovaci\'on (Spain) under grants  MTM2016-76702-P and ``Severo Ochoa Programme for Centres of Excellence'' in R\&D (SEV-2015-0554). A.Simoes is supported by the FCT (Portugal) research fellowship SFRH/BD/129882/2017. Thank you very much  to E. Padr\'on for her helpful comments.

\bibliography{Thermo_Discrete}

\begin{thebibliography}{ASdLLMdD20}

\bibitem[ASdLLMdD20]{AMLL}
Anahory Anahory~Simoes, Manuel de~León, Manuel Lainz, and David Martín~de
  Diego.
\newblock On the geometry of discrete contact mechanics. arxiv:2003.11892
  [math.ph], 2020.

\bibitem[BdLMP20]{BLMP}
Alessandro Bravetti, Manuel de~León, Juan~Carlos Marrero, and Edith Padr\'on.
\newblock \textrm{Contact Hamiltonian systems, Reeb-Liouville dynamics and
  invariant measures}, \textit{Preprint}, 2020.

\bibitem[Bra17]{Bravetti2017}
Alessandro Bravetti.
\newblock Contact {{Hamiltonian Dynamics}}: {{The Concept}} and {{Its Use}}.
\newblock {\em Entropy}, 19(12):535, October 2017.

\bibitem[Bra18]{Bravetti2018}
Alessandro Bravetti.
\newblock Contact geometry and thermodynamics.
\newblock {\em Int. J. Geom. Methods Mod. Phys.}, 16(supp01):1940003, October
  2018.

\bibitem[dLLV19]{deLeon2018}
Manuel de~Le\'{o}n and Manuel Lainz~Valc\'{a}zar.
\newblock Contact {H}amiltonian systems.
\newblock {\em J. Math. Phys.}, 60(10):102902, 18, 2019.

\bibitem[dLV19]{deLeon2019}
Manuel {de Le{\'o}n} and Manuel Lainz~Valc{\'a}zar.
\newblock Singular {{Lagrangians}} and precontact {{Hamiltonian Systems}}.
\newblock {\em International Journal of Geometric Methods in Modern Physics
  (forthcoming)}, August 2019.

\bibitem[dLV20]{deLeon2020}
Manuel {de Le{\'o}n} and Manuel Lainz~Valc{\'a}zar.
\newblock Infinitesimal symmetries in contact hamiltonian systems.
\newblock {\em Journal of Geometry and Physics}, 2020.

\bibitem[dRR87]{deLeon1987}
Manuel {de Le{\'o}n} and Paulo R.~Rodrigues.
\newblock {\em Methods of Differential Geometry in Analytical Mechanics},
  volume 158.
\newblock {Elsevier}, {Amsterdam}, 1987.

\bibitem[EB91a]{Ed-Ber}
B.~J. Edwards and A.~N. Beris.
\newblock Noncanonical {P}oisson bracket for nonlinear elasticity with
  extensions to viscoelasticity.
\newblock {\em J. Phys. A}, 24(11):2461--2480, 1991.

\bibitem[EB91b]{Ed-Ber-2}
B.~J. Edwards and A.~N. Beris.
\newblock Noncanonical {P}oisson bracket for nonlinear elasticity with
  extensions to viscoelasticity.
\newblock {\em J. Phys. A}, 24(11):2461--2480, 1991.

\bibitem[EMvdS07]{vander0}
Damien Eberard, Bernhard Maschke, and Arjan~J. van~der Schaft.
\newblock On the interconnection structures of irreversible physical systems.
\newblock In {\em Lagrangian and {H}amiltonian methods for nonlinear control
  2006}, volume 366 of {\em Lect. Notes Control Inf. Sci.}, pages 209--220.
  Springer, Berlin, 2007.

\bibitem[God69]{Godbillon1969}
Claude Godbillon.
\newblock {\em {G{\'e}om{\'e}trie diff{\'e}rentielle et m{\'e}canique
  analytique}}.
\newblock {Hermann}, {Paris}, 1969.
\newblock OCLC: 1038025757.

\bibitem[Gon96]{Gonz}
O.~Gonzalez.
\newblock Time integration and discrete {H}amiltonian systems.
\newblock {\em J. Nonlinear Sci.}, 6(5):449--467, 1996.

\bibitem[GOR12]{ignacio}
Juan~Carlos Garc\'{\i}a~Orden and Ignacio Romero.
\newblock Energy-entropy-momentum integration of discrete thermo-visco-elastic
  dynamics.
\newblock {\em Eur. J. Mech. A Solids}, 32:76--87, 2012.

\bibitem[GP20]{GrPa}
Sergio Grillo and Edith Padr\'{o}n.
\newblock Extended {H}amilton-{J}acobi theory, contact manifolds, and
  integrability by quadratures.
\newblock {\em J. Math. Phys.}, 61(1):012901, 22, 2020.

\bibitem[GY17]{Gay-Balmaz2017}
Fran{\c c}ois {Gay-Balmaz} and Hiroaki Yoshimura.
\newblock A {{Lagrangian}} variational formulation for nonequilibrium
  thermodynamics. {{Part I}}: {{Discrete}} systems.
\newblock {\em Journal of Geometry and Physics}, 111:169--193, January 2017.

\bibitem[GY19]{Gay-Balmaz2019}
Fran{\c c}ois {Gay-Balmaz} and Hiroaki Yoshimura.
\newblock From {{Lagrangian Mechanics}} to {{Nonequilibrium Thermodynamics}}:
  {{A Variational Perspective}}.
\newblock {\em Entropy}, 21(1):8, January 2019.

\bibitem[IA88]{ITOH}
T.~Itoh and K.~Abe.
\newblock Hamiltonian-conserving discrete canonical equations based on
  variational difference quotients.
\newblock {\em J. Comput. Phys.}, 76(1):85--102, 1988.

\bibitem[Kau84]{Kaufman}
Allan~N. Kaufman.
\newblock Dissipative {H}amiltonian systems: a unifying principle.
\newblock {\em Phys. Lett. A}, 100(8):419--422, 1984.

\bibitem[LM87]{marle}
Paulette Libermann and Charles-Michel Marle.
\newblock {\em Symplectic geometry and analytical mechanics}, volume~35 of {\em
  Mathematics and its Applications}.
\newblock D. Reidel Publishing Co., Dordrecht, 1987.
\newblock Translated from the French by Bertram Eugene Schwarzbach.

\bibitem[Mie11]{mielke}
Alexander Mielke.
\newblock Formulation of thermoelastic dissipative material behavior using
  {GENERIC}.
\newblock {\em Contin. Mech. Thermodyn.}, 23(3):233--256, 2011.

\bibitem[MNCSS91]{Mrugala1991}
Ryszard Mrugala, James~D. Nulton, J.~Christian~Sch{\"o}n, and Peter Salamon.
\newblock Contact structure in thermodynamic theory.
\newblock {\em Reports on Mathematical Physics}, 29(1):109--121, February 1991.

\bibitem[Mor86]{morrison}
Philip~J. Morrison.
\newblock A paradigm for joined {H}amiltonian and dissipative systems.
\newblock volume~18, pages 410--419. 1986.
\newblock Solitons and coherent structures (Santa Barbara, Calif., 1985).

\bibitem[Mru93]{Mruga}
R.~Mrugala.
\newblock Continuous contact transformations in thermodynamics.
\newblock In {\em Proceedings of the {XXV} {S}ymposium on {M}athematical
  {P}hysics ({T}oru\'{n}, 1992)}, volume~33, pages 149--154, 1993.

\bibitem[MW01]{marsden-west}
J.~E. Marsden and M.~West.
\newblock Discrete mechanics and variational integrators.
\newblock {\em Acta Numer.}, 10:357--514, 2001.

\bibitem[QT96]{QT1996}
G.~R.~W. Quispel and G.~S. Turner.
\newblock Discrete gradient methods for solving {ODE}s numerically while
  preserving a first integral.
\newblock {\em J. Phys. A}, 29(13):L341--L349, 1996.

\bibitem[VBS19]{VBS}
Mats Vermeeren, Alessandro Bravetti, and Marcello Seri.
\newblock Contact variational integrators.
\newblock {\em J. Phys. A}, 52(44):445206, 28, 2019.

\bibitem[vdSM19]{vander}
Arjan van~der Schaft and Bernhard Maschke.
\newblock About some system-theoretic properties of port-thermodynamic systems.
\newblock In {\em Geometric science of information}, volume 11712 of {\em
  Lecture Notes in Comput. Sci.}, pages 228--238. Springer, Cham, 2019.

\end{thebibliography}
\bibliographystyle{alpha}
\end{document}